\newcolumntype{a}{>{\columncolor{black!10}}c}
\providecommand{\keywords}[1]{\textbf{\textit{Keywords:}} #1}
\newcommand{\E}{\mathbb{E}}
\newcommand{\R}{\mathbb{R}}
\renewcommand{\P}{\mathbb{P}}
\newcommand{\one}{\mathds{1}}
\newcommand{\ppv}{\mathrm{PPV}}
\newcommand{\fpr}{\mathrm{FPR}}
\newcommand{\fnr}{\mathrm{FNR}}
\numberwithin{equation}{section}
\theoremstyle{plain}
    \newtheorem*{theorem*}{Theorem}
    \newtheorem{prop}{Proposition}
    \newtheorem{corollary}{Corollary}
\theoremstyle{definition}
    \newtheorem{definition}{Definition}
    \newtheorem*{definition*}{Definition}
\numberwithin{prop}{section}
\numberwithin{corollary}{section}
\title{Fair prediction with disparate impact: \\ {\Large A study of bias in recidivism prediction instruments}}
\author{Alexandra Chouldechova \thanks{Heinz College, Carnegie Mellon University}}
\date{Last revised: February 2017}
\begin{document}
  \maketitle
  
  
  \begin{abstract}
 Recidivism prediction instruments (RPI's) provide decision makers with an assessment of the likelihood that a criminal defendant will reoffend at a future point in time. While such instruments are gaining increasing popularity across the country, their use is attracting tremendous controversy. Much of the controversy concerns potential discriminatory bias in the risk assessments that are produced. This paper discusses several fairness criteria that have recently been applied to assess the fairness of recidivism prediction instruments.  We demonstrate that the criteria cannot all be simultaneously satisfied when recidivism prevalence differs across groups.  We then show how disparate impact can arise when a recidivism prediction instrument fails to satisfy the criterion of error rate balance. 
  \end{abstract}
  
  \keywords{disparate impact; bias; recidivism prediction; risk assessment; fair machine learning}
  
\section{Introduction} \label{sec:intro}

Risk assessment instruments are gaining increasing popularity within the criminal justice system, with versions of such instruments being used or considered for use in pre-trial decision-making, parole decisions, and in some states even sentencing \citep{ali-model-code-draft-4,blomberg2010validation, 538rpi}.  In each of these cases, a high-risk classification---particularly a high-risk misclassification---may have a direct adverse impact on a criminal defendant's outcome.   If the use of RPI's is to become commonplace, it is especially important to ensure that the instruments are free from discriminatory biases that could result in unethical practices and inequitable outcomes for different groups.

In a recent widely popularized investigation conducted by a team at ProPublica, \citet{propublica2016} studied an RPI called COMPAS\footnote{COMPAS \citep{compasfaq} is a risk assessment instrument developed by Northpointe Inc..  Of the 22 scales that COMPAS provides, the Recidivism risk and Violent Recidivism risk scales are the most widely used.  The empirical results in this paper are based on decile scores coming from the COMPAS Recidivism risk scale.}, concluding that it is biased against black defendants.  The authors found that the likelihood of a non-recidivating black defendant being assessed as high risk is nearly twice that of white defendants.  Similarly, the likelihood of a recidivating black defendant being assessed as low risk is nearly half that of white defendants.  In technical terms, these findings indicate that the COMPAS instrument has considerably higher false positive rates and lower false negative rates for black defendants than for white defendants.  

ProPublica's analysis has met with much criticism from both the academic community and from the Northpointe corporation.   Much of the criticism has focussed on the particular choice of fairness criteria selected for the investigation.  \citet{floresfalse} argue that the correct approach for assessing RPI bias is instead to check for \emph{calibration}, a fairness criterion that they show COMPAS satisfies.    Northpointe in their response\cite{dieterich2016compas} argue for a still different approach that checks for a fairness criterion termed \emph{predictive parity}, which they demonstrate COMPAS also satisfies.  We provide precise definitions and a more in-depth discussion of these and other fairness criteria in Section \ref{sec:background}.  

In this paper we show that the differences in false positive and false negative rates cited as evidence of racial bias by \citet{propublica2016} are a direct consequence of applying an RPI that that satisfies predictive parity to a population in which recidivism prevalence\footnote{\emph{Prevalence}, also termed the \emph{base rate}, is the proportion of individuals who recidivate in a given population.} differs across groups.   Our main contribution is twofold.  (1) First, we make precise the connection between the predictive parity criterion and error rates in classification.  (2) Next, we demonstrate how using an RPI that has different false postive and false negative rates between groups can lead to disparate impact when individuals assessed as high risk receive stricter penalties.  Throughout our discussion we use the term \emph{disparate impact} to refer to settings where a penalty policy has unintended disproportionate adverse impact on a particular group.

It is important to bear in mind that fairness itself---along with the notion of disparate impact---is a social and ethical concept, not a statistical one.  A risk prediction instrument that is fair with respect to particular fairness criteria may nevertheless result in disparate impact depending on how and where it is used.  In this paper we consider hypothetical use cases in which we are able to directly connect particular fairness properties of an RPI to a measure of disparate impact.  We present both theoretical and empirical results to illustrate how disparate impact can arise.

\subsection{Outline of paper}

We begin in Section \ref{sec:fairness} by providing some background on several of the different fairness criteria that have appeared in recent literature.  We then proceed to demonstrate that an instrument that satisfies predictive parity cannot have equal false positive and negative rates across groups when the recidivism prevalence differs across those groups.  In Section \ref{sec:impact} we analyse a simple risk assessment-based sentencing policy and show how differences in false positive and false negative rates can result in disparate impact under this policy.  In Section \ref{sec:empirical} we back up our theoretical analysis by presenting some empirical results based on the data made available by the ProPublica investigators.  We conclude with a discussion of the issues that biased data presents for the arguments put forth in this paper.

\subsection{Data description and setup} \label{sec:intro_outline}

The empirical results in this paper are based on the Broward County data made publicly available by ProPublica\cite{propublica2016data}.  This data set contains COMPAS recidivism risk decile scores, 2-year recidivism outcomes, and a number of demographic and crime-related variables on individuals who were scored in 2013 and 2014.    We restrict our attention to the subset of defendants whose race is recorded as African-American ($b$) or Caucasian ($w$).\footnote{There are 6 racial groups represented in the data.  85\% of individuals are either African-American or Caucasian.}  After applying the same data pre-processing and filtering as reported in the ProPublica analysis, we are left with a data set on $n = 6150$ individuals, of whom $n_b = 3696$ are African-American and $n_c = 2454$ are Caucasian.

  
\section{Assessing fairness} \label{sec:fairness}

\subsection{Background} \label{sec:background}

We begin by with some notation.  Let $S = S(x)$ denote the risk score based on covariates \mbox{$X = x \in \R^p$}, with higher values of $S$ corresponding to higher levels of assessed risk.  We will interchangeably refer to $S$ as a \emph{score} or an \emph{instrument}.  For simplicity, our discussion of fairness criteria will focus on a setting where there exist just two groups.  We let $R \in \{b, w\}$ denote the group to which an individual belongs, and do not preclude $R$ from being one of the elements of $X$.  We denote the outcome indicator by $Y \in \{0 , 1\}$, with $Y = 1$ indicating that the given individual goes on to recidivate.  Lastly, we introduce the quantity $s_\mathrm{HR}$, which denotes the high-risk score threshold.  Defendants whose score $S$ exceeds $s_\mathrm{HR}$ will be referred to as \emph{high-risk}, while the remaining defendants will be referred to as \emph{low-risk}.  

  With this notation in hand, we now proceed to define and discuss several fairness criteria that commonly appear in the literature, beginning with those mentioned in the introduction.  We indicate cases where a given criterion is known to us to also commonly appear under some other name.  All of the criteria presented below can also be assessed \emph{conditionally} by further conditioning on some covariates in $X$.  We discuss this point in greater detail in Section~\ref{sec:covariates}.
  
 \begin{definition}[Calibration]
   A score $S = S(x)$ is said to be \emph{well-calibrated} if it reflects the same likelihood of recidivism irrespective of the individuals' group membership. That is, if for all values of $s$,
   \begin{equation}
   \P(Y = 1 \mid S = s, R = b) = \P(Y = 1 \mid S = s, R = w).
   \label{eq:def_calibration}
   \end{equation}
 \end{definition}
 
 Within the educational and psychological testing and assessment literature, the notion of \emph{calibration} features among the widely accepted and adopted standards for empirical fairness assessment.  In this literature, an instrument that is \emph{well-calibrated} is referred to as being \emph{free from predictive bias}.  This criterion has recently been applied to the PCRA\footnote{The Post Conviction Risk Assessment (PCRA) tool was developed by the Administrative Office of the United States Courts for the purpose of improving ``the effectiveness and efficiency of post-conviction supervision''\citep{pcra}} instrument, with initial findings suggesting that calibration is satisfied with respect race\citep{singh2013predictive, skeem2015risk}, but not with respect to gender\citep{skeem2016gender}.   In their response to the ProPublica investigation, \citet{floresfalse} verify that COMPAS is well-calibrated using logistic regression modeling. 
  
 \begin{definition}[Predictive parity]
   A score $S = S(x)$ satisfies \emph{predictive parity} at a threshold $s_\mathrm{HR}$ if the likelihood of recidivism among high-risk offenders is the same regardless of group membership. That is, if,
   \begin{equation}
   \P(Y = 1 \mid S > s_\mathrm{HR} , R = b) = \P(Y = 1 \mid S > s_\mathrm{HR}, R = w).
   \label{eq:def_predictive_parity}
   \end{equation}
 \end{definition}
 
 Predictive parity at a given threshold $s_\mathrm{HR}$ amounts to requiring that the \emph{positive predictive value} (PPV) of the classifier $\hat Y = \one_{S > s_\mathrm{HR}}$ be the same across groups.  While predictive parity and calibration look like very similar criteria, well-calibrated scores can fail to satisfy predictive parity at a given threshold.  This is because the relationship between \eqref{eq:def_predictive_parity} and \eqref{eq:def_calibration} depends on the conditional distribution of $S \mid R = r$, which can differ across groups in ways that result in PPV imbalance.  In the simple case where $S$ itself is binary, a score that is well-calibrated will also satisfy predictive parity. Northpointe's refutation\cite{dieterich2016compas} of the ProPublica analysis shows that COMPAS satisfies predictive parity for threshold choices of interest. 

 \begin{definition}[Error rate balance]
   A score $S = S(x)$ satisfies \emph{error rate balance} at a threshold $s_\mathrm{HR}$ if the false positive and false negative error rates are equal across groups. That is, if,
   \begin{align}
   &\P(S > s_\mathrm{HR} \mid  Y = 0,  R = b) = \P(S > s_\mathrm{HR} \mid  Y = 0,  R = w)  \;, \;\;  \text{ and} \label{eq:def_fpr}\\ 
   & \P(S \le s_\mathrm{HR} \mid  Y = 1,  R = b) = \P(S \le s_\mathrm{HR} \mid  Y = 1,  R = w), \label{eq:def_fnr}
   \end{align}
   where the expressions in the first line are the group-specific false positive rates, and those in the second line are the group-specific false negative rates.  
 \end{definition} 
 
 ProPublica's analysis considered a threshold of $s_\mathrm{HR} = 4$, which they showed leads to considerable imbalance in both false positive and false negative rates.  While this choice of cutoff met with some criticism, we will see later in this section that error rate imbalance persists---indeed, must persist---for any choice of cutoff at which the score satisfies the predictive parity criterion.  Error rate balance is also closely connected to the notions of \emph{equalized odds} and \emph{equal opportunity} as introduced in the recent work of \citet{hardt2016equality}.
 
 \begin{definition}[Statistical parity]
   A score $S = S(x)$ satisfies \emph{statistical parity} at a threshold $s_\mathrm{HR}$ if the proportion of individuals classified as high-risk is the same for each group. That is, if,
   \begin{equation}
   \P(S > s_\mathrm{HR} \mid  R = b) = \P(S > s_\mathrm{HR} \mid  R = w)
   \label{eq:def_stat_parity}
   \end{equation} 
 \end{definition}
 
 Statistical parity also goes by the name of \emph{equal acceptance rates}\cite{zliobaite2015relation} or \emph{group fairness}\cite{dwork2012fairness}, though it should be noted that these terms are in many cases not used synonymously.  While our discussion focusses primarily on first three fairness criteria, statistical parity is widely used within the machine learning community and may be the criterion with which many readers are most familiar\cite{calders2010three, fish2016confidence}.  Statistical parity is well-suited to contexts such as employment or admissions, where it may be desirable or required by law or regulation to employ or admit individuals in equal proportion across racial, gender, or geographical groups.  It is, however, a difficult criterion to motivate in the recidivism prediction setting, and thus will not be further considered in this work. 
 
 \subsection{Further related work}

Though the study of discrimination in decision making and predictive modeling is rapidly evolving, it also has a long and rich multidisciplinary history.   \citet*{romei2014multidisciplinary} provide an excellent overview of some of the work in this broad subject area.  The recent work of \citet*{barocas2016big} offers a broad examination of algorithmic fairness framed within the context of anti-discrimination laws governing employment practices.  \citet{hannah2013actuarial}, \citet{skeem2013risk}, and \citet{monahan2016risk} examine legal and ethical issues relating specifically to the use of risk assessment instruments in sentencing, citing the potential for race and gender discrimination as a major concern.

In work concurrent with our own, several other researchers have also investigated the compatibility of different notions of fairness.  \citet{kleinberg2016inherent} show that calibration cannot be satisfied simultaneously with the fairness criteria of \emph{balance for the negative class} and \emph{balance for the positive class}.  Translated into the present context, the latter criteria require that the average score assigned to non-recidivists (the negative class) should be the same for both groups, and that the same should hold among recidivists (the positive class).
The work of \citet{corbett-davies-2016} closely parallels the results that we present in Section \ref{sec:incompatibility}, reaching the same conclusion regarding the incompatibility of predictive parity and error rate balance in the setting of unequal prevalence.

  \subsection{Predictive parity, false positive rates, and false negative rates} \label{sec:incompatibility}

  \begin{figure}[!hb]
   \centering
   \includegraphics[width = 0.5\linewidth]{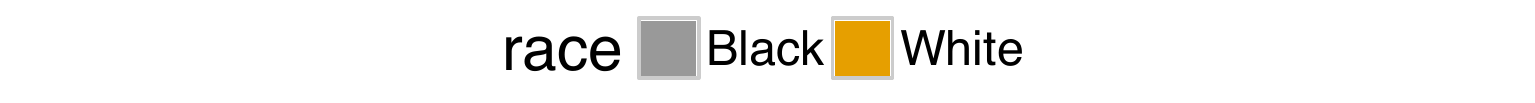} \\
   \begin{subfigure}[t]{0.48\linewidth}
     \includegraphics[width=\textwidth]{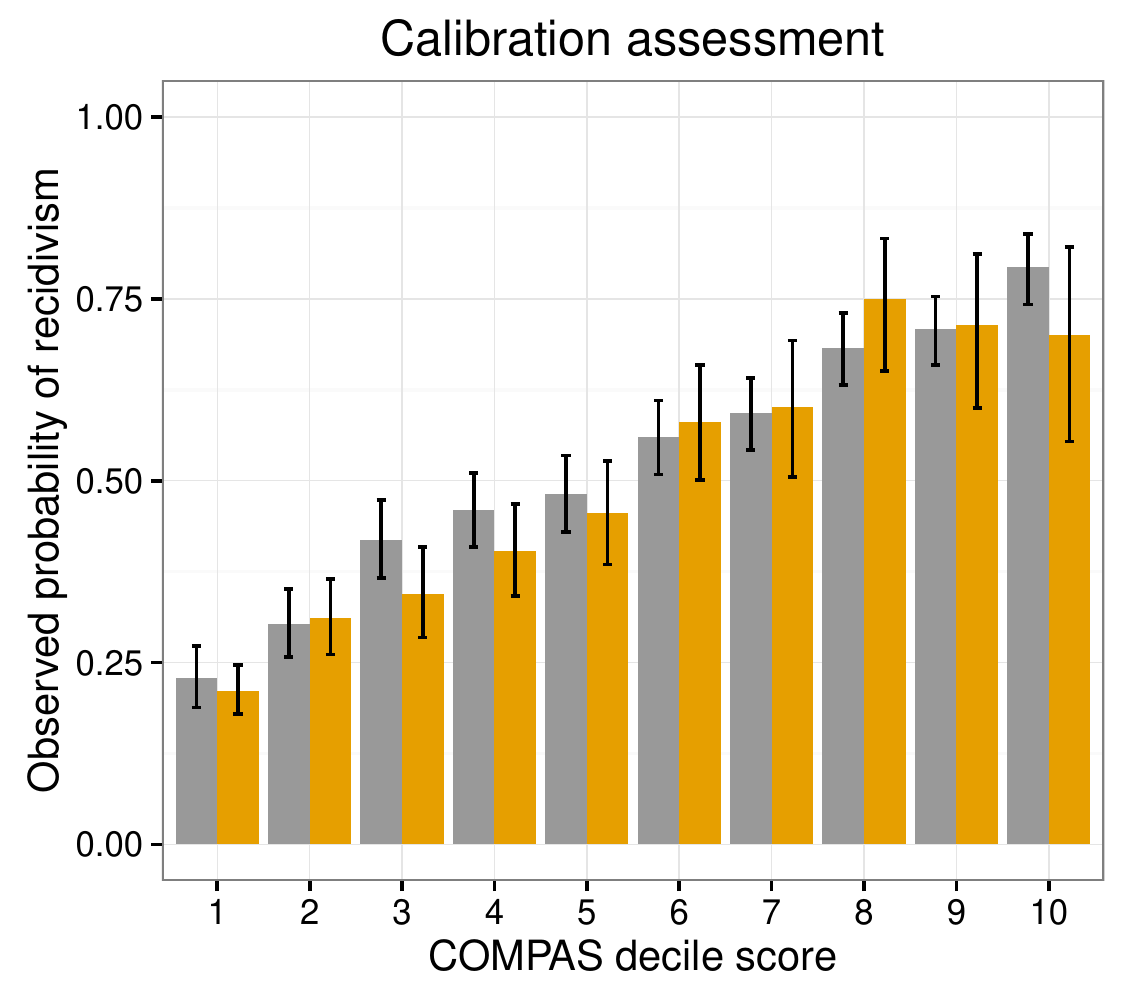}
     \caption{Bars represent empirical estimates of the expressions in \eqref{eq:def_calibration}: $\P(Y = 1 \mid S = s, R = r)$ for decile scores $s \in \{1, \ldots, 10\}.$}
   \end{subfigure}
   \hspace{1em}
   \begin{subfigure}[t]{0.48\linewidth}
     \includegraphics[width=\textwidth]{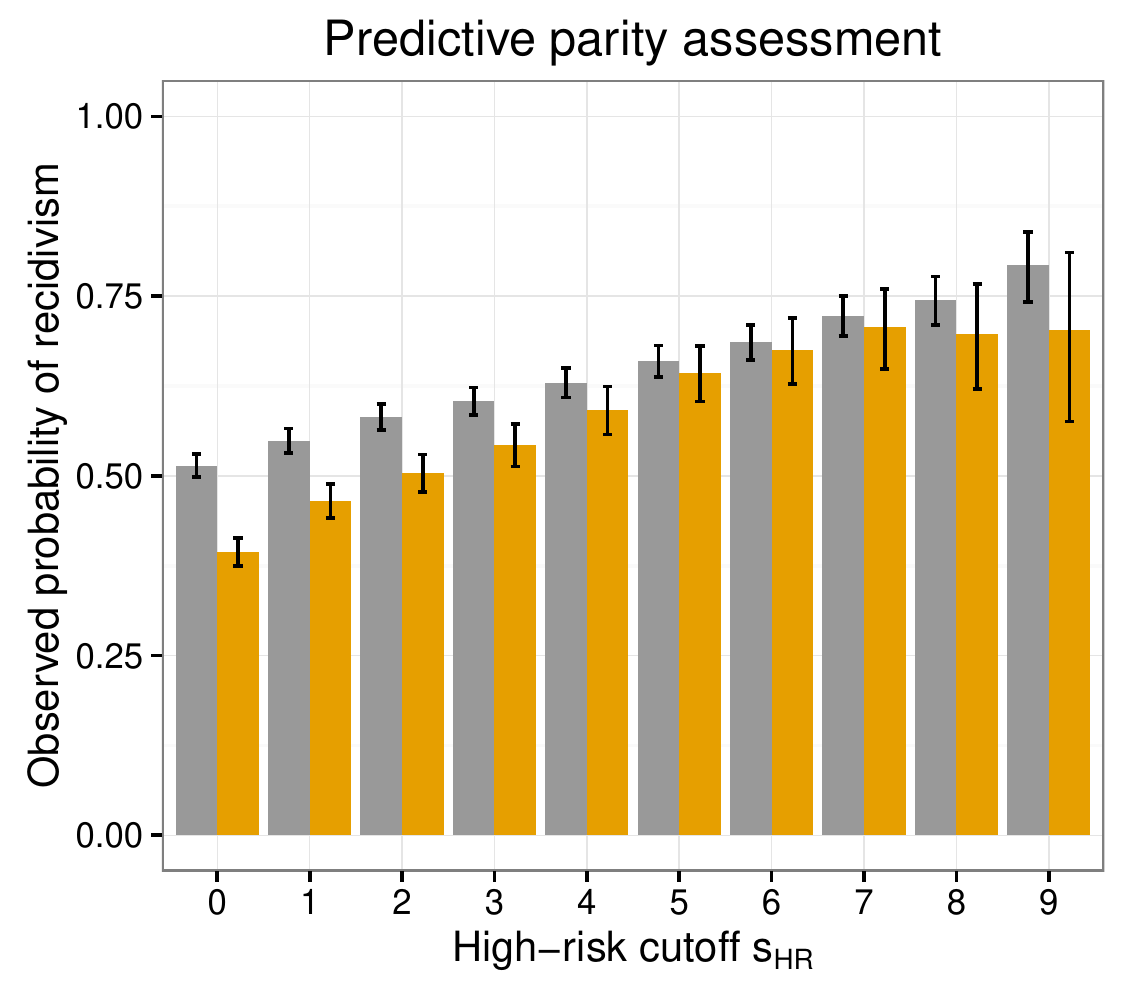}
     \caption{Bars represent empirical estimates of the expressions in \eqref{eq:def_predictive_parity}: $\P(Y = 1 \mid S > s_\mathrm{HR} , R = r)$ for values of the high-risk cutoff $s_{\mathrm{HR}} \in \{0, \ldots, 9\}$}
   \end{subfigure} \\
   \begin{subfigure}[t]{0.48\linewidth}
     \includegraphics[width=\textwidth]{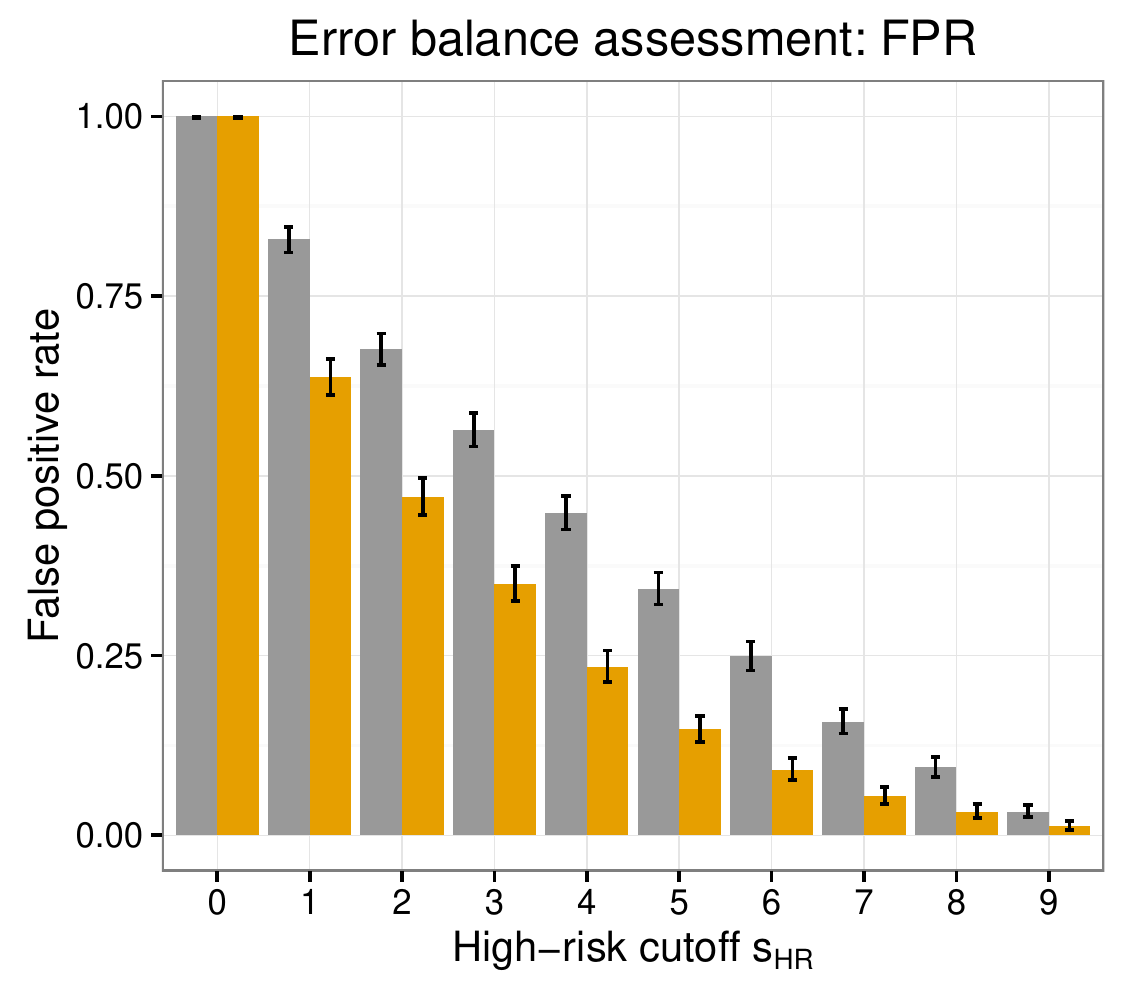}
     \caption{Bars represent observed false positive rates, which are empirical estimates of the expressions in \eqref{eq:def_fpr}: $\P(S > s_\mathrm{HR} \mid Y = 0 , R = r)$ for values of the high-risk cutoff $s_{\mathrm{HR}} \in \{0, \ldots, 9\}$}
     \label{subfig:fpr}
   \end{subfigure}
   \hspace{1em}
   \begin{subfigure}[t]{0.48\linewidth}
     \includegraphics[width=\textwidth]{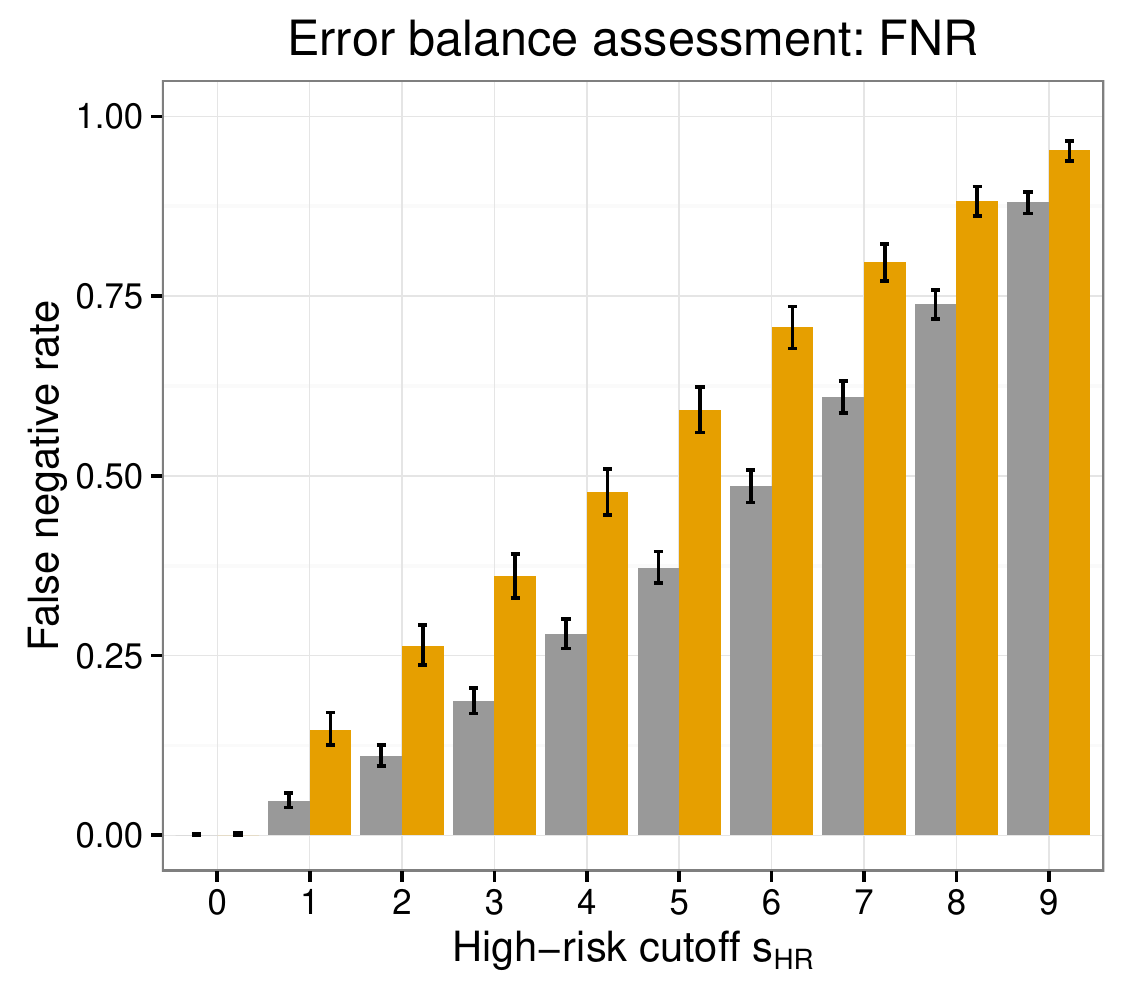}
     \caption{Bars represent observed false negative rates, which are empirical estimates of the expressions in \eqref{eq:def_fnr}: $\P(S \le s_\mathrm{HR} \mid Y = 1, R = r)$ for values of the high-risk cutoff $s_{\mathrm{HR}} \in \{0, \ldots, 9\}$}
     \label{subfig:fnr}
   \end{subfigure}
 
   \vspace{1em}
   \caption{Empirical assessment of the COMPAS RPI according to three of the fairness criteria presented in Section \ref{sec:background}.  Error bars represent 95\% confidence intervals.  These Figures confirm that COMPAS is (approximately) well-calibrated, satisfies predictive parity for high-risk cutoff values of 4 or higher, but fails to have error rate balance.}
   \label{fig:fairness_assessment}
  \end{figure}
 
In this section we present our first main result, which establishes that predictive parity is incompatible with error rate balance when prevalence differs across groups.  To better motivate the discussion, we begin by presenting an empirical fairness assessment of the COMPAS RPI.  Figure~\ref{fig:fairness_assessment} shows plots of the observed recidivism rates and error rates corresponding to the fairness notions of calibration, predictive parity, and error rate balance.   We see that the COMPAS RPI is (approximately) well-calibrated, and also satisfies predictive parity provided that the high-risk cutoff $s_\mathrm{HR}$ is $4$ or greater. However, COMPAS fails on both false positive and false negative error rate balance across the range of high-risk cutoffs. 
 
\citet{propublica2016} focussed on a high-risk cutoff of $s_\mathrm{HR} = 4$ for their analysis, which some critics have argued is too low, suggesting that $s_\mathrm{HR} = 7$ is more suitable.  As can be seen from Figures  \ref{subfig:fpr} and \ref{subfig:fnr}, significant error rate imbalance persists at this cut-off as well.   Moreover, the error rates achieved at so high a cutoff are at odds with evidence suggesting that the use of RPI's is of interest in settings where false negatives have a higher cost than false positives, with relative cost estimates ranging from 2.6  to upwards of 15. \citep{nij2013, psc2013}
 
 As we now proceed to show, the error rate imbalance exhibited by COMPAS is not a coincidence, nor can it be remedied in the present context.  When the recidivism prevalence--i.e., the base rate $\P(Y = 1 \mid R = r)$---differs across groups, any instrument that satisfies predictive parity at a given threshold $s_\mathrm{HR}$ \emph{must} have imbalanced false positive or false negative errors rates at that threshold.  To understand why predictive parity and error rate balance are mutually exclusive in the setting of unequal recidivism prevalence, it is instructive to think of how these quantities are all related.  
 
 Given a particular choice of $s_\mathrm{HR}$, we can summarize an instrument's performance in terms of a confusion matrix, as shown in Table~\ref{tab:confusion} below.
 \begin{table}[h]
   \centering 
      \begin{tabular}{|r|rr|}
        \hline
       & Low-Risk & High-Risk \\ 
        \hline
      $Y = 0$ & TN & FP \\ 
        $Y = 1$ & FN & TP \\ 
         \hline
      \end{tabular} \\
      \vspace{0.5em}
      \caption{T/F denote True/False and N/P denote Negative/Positive. For instance, FP is the number of false positives: individuals who are classified as high-risk but who do not reoffend.}
      \label{tab:confusion}
 \end{table}

\vspace{-1em}

\noindent All of the fairness metrics presented in Section \ref{sec:background} can be thought of as imposing constraints on the values (or the distribution of values) in this table.  Another constraint---one that we have no direct control over---is imposed by the recidivism prevalence within groups.  It is not difficult to show that the prevalence ($p$), positive predictive value ($\ppv$), and false positive and negative error rates ($\fpr$, $\fnr$) are related via the equation
 \begin{equation}
   \fpr = \frac{p}{1 - p} \frac{1 - \ppv}{\ppv} ( 1 - \fnr).  
   \label{eq:fpr_fnr}
 \end{equation}
From this simple expression we can see that if an instrument satisfies predictive parity---that is, if the PPV is the same across groups---but the prevalence differs between groups, the instrument cannot achieve equal false positive and false negative rates across those groups.   
  
 This observation enables us to better understand why we observe such large discrepancies in FPR and FNR between black and white defendants in Figure~\ref{fig:fairness_assessment}. The recidivism rate among black defendants in the data is 51\%, compared to 39\% for White defendants.  Thus at any threshold $s_\mathrm{HR}$ where the COMPAS RPI satisfies predictive parity, equation \eqref{eq:fpr_fnr} tells us that some level of imbalance in the error rates must exist.  Since not all of the fairness criteria can be satisfied at the same time, it becomes important to understand the potential impact of failing to satisfy particular criteria.  This question is explored in the context of a hypothetical risk-based sentencing framework in the next section.

\section{Assessing impact} \label{sec:impact}

In this section we show how differences in false positive and false negative rates can result in disparate impact under policies where a high-risk assessment results in a stricter penalty for the defendant.  Such situations may arise when risk assessments are used to inform bail, parole, or sentencing decisions.  In Pennsylvania and Virginia, for instance, statutes permit the use of RPI's in sentencing, provided that the sentence ultimately falls within accepted guidelines\cite{ali-model-code-draft-4}.  We use the term ``penalty'' somewhat loosely in this discussion to refer to outcomes both in the pre-trial and post-conviction phase of legal proceedings. For instance, even though pre-trial outcomes such as the amount at which bail is set are not punitive in a legal sense, we nevertheless refer to bail amount as a ``penalty'' for the purpose of our discussion.  

There are notable cases where RPI's are used for the express purpose of informing risk reduction efforts.  In such settings, individuals assessed as high risk receive what may be viewed as a benefit rather than a penalty.  The PCRA score, for instance, is intended to support precisely this type of decision-making at the federal courts level \citep{skeem2015risk}.  Our analysis in this section specifically addresses use cases where high-risk individuals receive stricter penalties.

To begin, consider a setting in which guidelines indicate that a defendant is to receive a penalty $t_\mathrm{min} \le T \le t_\mathrm{max}$.  A very simple risk-based approach, which we will refer to as the MinMax\footnote{The term MinMax as used throughout this paper has no intended connection the decision-theoretic notion of minimax decision rules.  Min and Max in this context refer to the minimum and maximum allowable sentences as stipulated by sentencing guidelines.} policy, would be to assign penalties as follows:
\begin{equation}
 T_{\mathrm{MinMax}}(s) =  \begin{cases}
      t_{\mathrm{min}} & \text{if } s > s_\mathrm{HR} \\
      t_{\mathrm{max}} & \text{if } s < s_\mathrm{HR}
      \end{cases}.
\end{equation}

In this simple setting, we can precisely characterize the extent of disparate impact in terms of recognizable quantities. Our analysis will focus on the quantity 
\[
\Delta = \Delta(y_1, y_2) \equiv \E(T \mid R = b, Y = y_1) - \E(T \mid R = w, Y = y_2),
\]
which is the expected difference in sentence duration between defendants in different groups, with potentially different outcomes $y_1, y_2 \in \{0, 1\}$.  $\Delta$ is taken to serve as our the measure of disparate impact.

\begin{prop}
  The expected difference in penalty under the MinMax policy is given by
  \begin{align*}
  \Delta &\equiv \E(T \mid R = b, Y = y_1) - \E(T \mid R = w, Y = y_2) \\ 
  &= (t_\mathrm{max}- t_\mathrm{min})\big(\P(S > s_\mathrm{HR} \mid R = b, Y = y_1) - \P(S > s_\mathrm{HR} \mid R = w, Y = y_2)\big)
  \end{align*}
  \label{prop:main}
\end{prop}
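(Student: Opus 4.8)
The plan is to reduce everything to a single linear identity for the penalty and then apply linearity of conditional expectation. The key observation is that under the MinMax policy the penalty $T$ is a deterministic function of the score $S$ that depends on $S$ only through the high-risk indicator $\one_{S > s_\mathrm{HR}}$: it takes the penalty assigned to high-risk defendants, $t_\mathrm{max}$, on the high-risk event, and $t_\mathrm{min}$ on its complement. I would therefore begin by writing
\begin{equation*}
T = t_\mathrm{min} + (t_\mathrm{max} - t_\mathrm{min})\,\one_{S > s_\mathrm{HR}},
\end{equation*}
which holds pointwise once I adopt the convention that low-risk means $S \le s_\mathrm{HR}$ (consistent with the error-rate definitions in Section~\ref{sec:background}), so that the high-risk and low-risk events genuinely partition the sample space.

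Next I would condition on group membership and outcome and take expectations of this identity. Since $t_\mathrm{min}$ and $t_\mathrm{max}$ are constants and $\E(\one_{S > s_\mathrm{HR}} \mid R = r, Y = y) = \P(S > s_\mathrm{HR} \mid R = r, Y = y)$, linearity of expectation gives
\begin{equation*}
\E(T \mid R = r, Y = y) = t_\mathrm{min} + (t_\mathrm{max} - t_\mathrm{min})\,\P(S > s_\mathrm{HR} \mid R = r, Y = y)
\end{equation*}
for each group $r \in \{b, w\}$ and outcome $y \in \{0, 1\}$. This step is where the entire content of the proposition lives: the expected penalty within a (group, outcome) cell is an affine function of the cell's high-risk rate.

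Finally I would instantiate this expression twice---once at $(r, y) = (b, y_1)$ and once at $(r,y) = (w, y_2)$---and subtract. The additive constant $t_\mathrm{min}$ cancels and the common factor $(t_\mathrm{max} - t_\mathrm{min})$ pulls out of the difference, leaving exactly the claimed formula for $\Delta$. I do not anticipate any substantive obstacle: the argument is pure linearity of expectation, and the only point requiring a little care is the treatment of the boundary value $s = s_\mathrm{HR}$, which is why I fold it into the low-risk case at the outset so that the indicator representation of $T$ is exact and the two events form a true partition.
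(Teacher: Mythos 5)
Your proof is correct and is essentially the same as the paper's: the appendix computes $\E(T \mid R=r, Y=y) = t_\mathrm{max}\P(HR \mid \cdot) + t_\mathrm{min}(1-\P(HR \mid \cdot))$ and subtracts, which is exactly your affine-in-the-indicator identity written out term by term. Your explicit handling of the boundary $s = s_\mathrm{HR}$ (and your assignment of $t_\mathrm{max}$ to the high-risk event, matching the appendix rather than the typo in the displayed definition of $T_{\mathrm{MinMax}}$) is a minor tidying, not a different argument.
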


\vspace{-2em}
\noindent  A proof can be found in Appendix~\ref{appendix:proofs}. We will discuss two immediate Corollaries of this result.

\begin{corollary}[Non-Recidivists]
  Among individuals who \emph{do not recidivate}, the difference in average penalty under the MinMax policy is 
  \begin{equation}
    \Delta = (t_\mathrm{max} - t_\mathrm{min})(\fpr_b - \fpr_w),
    \label{eq:survive}
  \end{equation}
  where $\fpr_r$ denotes the false positive rate among individuals in group $R = r$.  
  \label{cor:nonrecid}
\end{corollary}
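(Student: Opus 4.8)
The plan is to obtain this corollary as an immediate specialization of Proposition~\ref{prop:main}, with essentially no new work required beyond a correct identification of terms. First I would invoke Proposition~\ref{prop:main}, which holds for arbitrary outcomes $y_1, y_2 \in \{0,1\}$, and restrict attention to the non-recidivating population by setting $y_1 = y_2 = 0$. Substituting these values into the general formula yields
\[
\Delta = (t_\mathrm{max} - t_\mathrm{min})\big(\P(S > s_\mathrm{HR} \mid R = b, Y = 0) - \P(S > s_\mathrm{HR} \mid R = w, Y = 0)\big).
\]

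The only remaining step is to recognize each conditional probability as a group-specific false positive rate. By the definition of error rate balance in equation~\eqref{eq:def_fpr}, the quantity $\P(S > s_\mathrm{HR} \mid Y = 0, R = r)$ is precisely $\fpr_r$: the probability that a defendant in group $r$ who does not reoffend is nonetheless classified as high-risk. Since the order of the conditioning events is immaterial, $\P(S > s_\mathrm{HR} \mid R = r, Y = 0) = \fpr_r$, and substituting $r = b$ and $r = w$ gives the claimed expression $\Delta = (t_\mathrm{max} - t_\mathrm{min})(\fpr_b - \fpr_w)$.

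There is no genuine obstacle here, as the corollary inherits all of its content from Proposition~\ref{prop:main}; the only point requiring care is the bookkeeping of matching the conditioning events $\{Y = 0\}$ in the specialized probabilities to the false positive rates defined earlier in Section~\ref{sec:background}. I would simply flag explicitly that ``among individuals who do not recidivate'' corresponds to conditioning on $Y = 0$ in \emph{both} groups simultaneously, so that the common factor $(t_\mathrm{max} - t_\mathrm{min})$ multiplies a difference of two false positive rates rather than a difference involving mixed outcomes.
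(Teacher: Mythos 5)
Your proof is correct and matches the paper's approach exactly: the paper presents this as an immediate corollary of Proposition~\ref{prop:main}, obtained by setting $y_1 = y_2 = 0$ and identifying the resulting conditional probabilities with the group-specific false positive rates from \eqref{eq:def_fpr}. No gaps.
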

\begin{corollary}[Recidivists]
  Among individuals who \emph{recidivate}, the difference in average penalty under the MinMax policy is 
  \begin{equation}
    \Delta = (t_\mathrm{max} - t_\mathrm{min})(\fnr_w - \fnr_b),
    \label{eq:recid}
  \end{equation}
  where $\fnr_r$ denotes the false negative rate among individuals in group $R = r$.  
  \label{cor:recid}
\end{corollary}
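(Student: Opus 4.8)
The plan is to obtain this corollary as an immediate specialization of Proposition~\ref{prop:main}. First I would set both outcome indicators equal to one, taking $y_1 = y_2 = 1$, so that both conditional expectations defining $\Delta$ are restricted to individuals who recidivate. With this choice Proposition~\ref{prop:main} yields
\begin{equation*}
\Delta = (t_\mathrm{max} - t_\mathrm{min})\big(\P(S > s_\mathrm{HR} \mid R = b, Y = 1) - \P(S > s_\mathrm{HR} \mid R = w, Y = 1)\big),
\end{equation*}
so the entire task reduces to re-expressing the two high-risk probabilities in terms of the false negative rates.

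The one substantive step is to invoke the complementary relationship between a high-risk classification and a false negative. Conditional on $Y = 1$ and $R = r$, the events $\{S > s_\mathrm{HR}\}$ and $\{S \le s_\mathrm{HR}\}$ are complementary, so their conditional probabilities sum to one. Since $\fnr_r = \P(S \le s_\mathrm{HR} \mid Y = 1, R = r)$ by the definition of error rate balance in \eqref{eq:def_fnr}, I would conclude that $\P(S > s_\mathrm{HR} \mid R = r, Y = 1) = 1 - \fnr_r$ for each group $r \in \{b, w\}$. Substituting this identity for both groups, the constant terms cancel and the sign reverses, giving $\Delta = (t_\mathrm{max} - t_\mathrm{min})\big((1 - \fnr_b) - (1 - \fnr_w)\big) = (t_\mathrm{max} - t_\mathrm{min})(\fnr_w - \fnr_b)$, as claimed.

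Because the result is a direct corollary, there is no real obstacle beyond bookkeeping; the only point demanding care is the ordering of the difference. For a recidivating defendant a high-risk classification is a \emph{correct} positive rather than an error, so passing to the complement converts the black-minus-white difference in high-risk probabilities into a white-minus-black difference in false negative rates. This sign reversal is exactly what distinguishes the present statement, featuring $\fnr_w - \fnr_b$, from the $\fpr_b - \fpr_w$ appearing in Corollary~\ref{cor:nonrecid}, where conditioning on $Y = 0$ makes the high-risk event itself the error.
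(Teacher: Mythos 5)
Your proposal is correct and matches the paper's route exactly: the paper presents this as an immediate corollary of Proposition~\ref{prop:main} with $y_1 = y_2 = 1$, and your identification of $\P(S > s_\mathrm{HR} \mid R = r, Y = 1) = 1 - \fnr_r$ together with the resulting sign reversal is precisely the intended bookkeeping.
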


When using an RPI that satisfies predictive parity in populations where recidivism prevalence differs across groups, it will generally be the case that the higher recidivism prevalence group will have a higher FPR and lower FNR.  From equations~\eqref{eq:survive} and \eqref{eq:recid}, we can see that this would on average result in greater penalties for defendants in the higher prevalence group, both among recidivists and non-recidivists.   

An interesting special case to consider is one where $t_\mathrm{min} = 0$.  This could arise in sentencing decisions for offenders convicted of low-severity crimes who have good prior records.  In such cases, so-called restorative sanctions may be imposed as an alternative to a period of incarceration. If we further take $t_\mathrm{max} = 1$, then $\E T = \P(T \neq 0)$, which can be interpreted as the probability that a defendant receives a sentence imposing some period of incarceration.  

It is easy to see that in such settings a non-recidivist in group $b$ is $\fpr_b /  \fpr_w$ times more likely to be incarcerated compared to a non-recidivist in group $w$.\footnote{We are overloading notation in this expression:  Here, \mbox{$\fpr_r = \P(\mathrm{HR} \mid R = r, t_L = 0)$}, similarly for $\fnr_r$.} This naturally raises the question of whether overall differences in error rates are observed to persist across more granular subpopulations, such as the subset of individuals eligible for restorative sanctions.  We explore this question in the section below.

\subsection{Conditioning on other covariates} \label{sec:covariates}

One might expect that differences in false positive rates are largely attributable to the subset of defendants who are charged with more serious offenses and who have a larger number of prior arrests/convictions.  While it is true that the false positive rates within both racial groups are higher for defendants with worse criminal histories, considerable between-group differences in these error rates persist across low prior count subgroups.  Figure~\ref{fig:fpr_prior} shows plots of false positive rates across different ranges of prior count for all defendants and also for the subset charged with a misdemeanor offense, which is the lowest severity criminal offense category.  As one can see, differences in false positive rates between Black defendants and White defendants persist across prior record subgroups.

In general, all of the theoretical results presented in this section extend to the setting where we further condition on the covariates $X$.  The main difference is that all classification metrics would need to be evaluated conditional on $X$.  For instance, assuming that $t_\mathrm{min}$ and $t_\mathrm{max}$ are constant on a set $\mathcal{X},$ Corollary~\ref{cor:nonrecid} would say that the difference in average penalty under the MinMax policy among non-recidivists for whom $X \in \mathcal{X}$ is given by
\begin{align}
\Delta &=  (t_\mathrm{max} - t_\mathrm{min})\left( \fpr_b(\mathcal{X}) - \fpr_w(\mathcal{X}) \right)  \\
&\equiv(t_\mathrm{max} - t_\mathrm{min}) \left( \P(S > s_\mathrm{HR} \mid R = b, Y = 0, X \in \mathcal{X}) - \P(S > s_\mathrm{HR} \mid R = w, Y = 0, X \in \mathcal{X}) \right).
  \label{eq:conditioning}
\end{align}
\indent  The false positive rates shown in Figure~\ref{fig:fpr_prior}(a) correspond precisely to the quantities $\fpr_r(\mathcal{X})$ for choices of $\mathcal{X}$ given by different prior record count bins.  The leftmost bars correspond to taking $\mathcal{X} = \{\# \text{priors} = 0\}$.  Similarly the leftmost bars in Figure~\ref{fig:fpr_prior}(a) correspond to taking $\mathcal{X} = \{\# \text{priors} = 0, \text{charge degree} = M\}$.  In Appendix \ref{appendix:logistic} we present a logistic regression analysis showing that significant differences in false positive rates persist even after adjusting for a number of other recidivism-related covariates.

\begin{figure}[!ht]
 \centering
 \begin{subfigure}[t]{0.48\linewidth}
   \includegraphics[width=\textwidth]{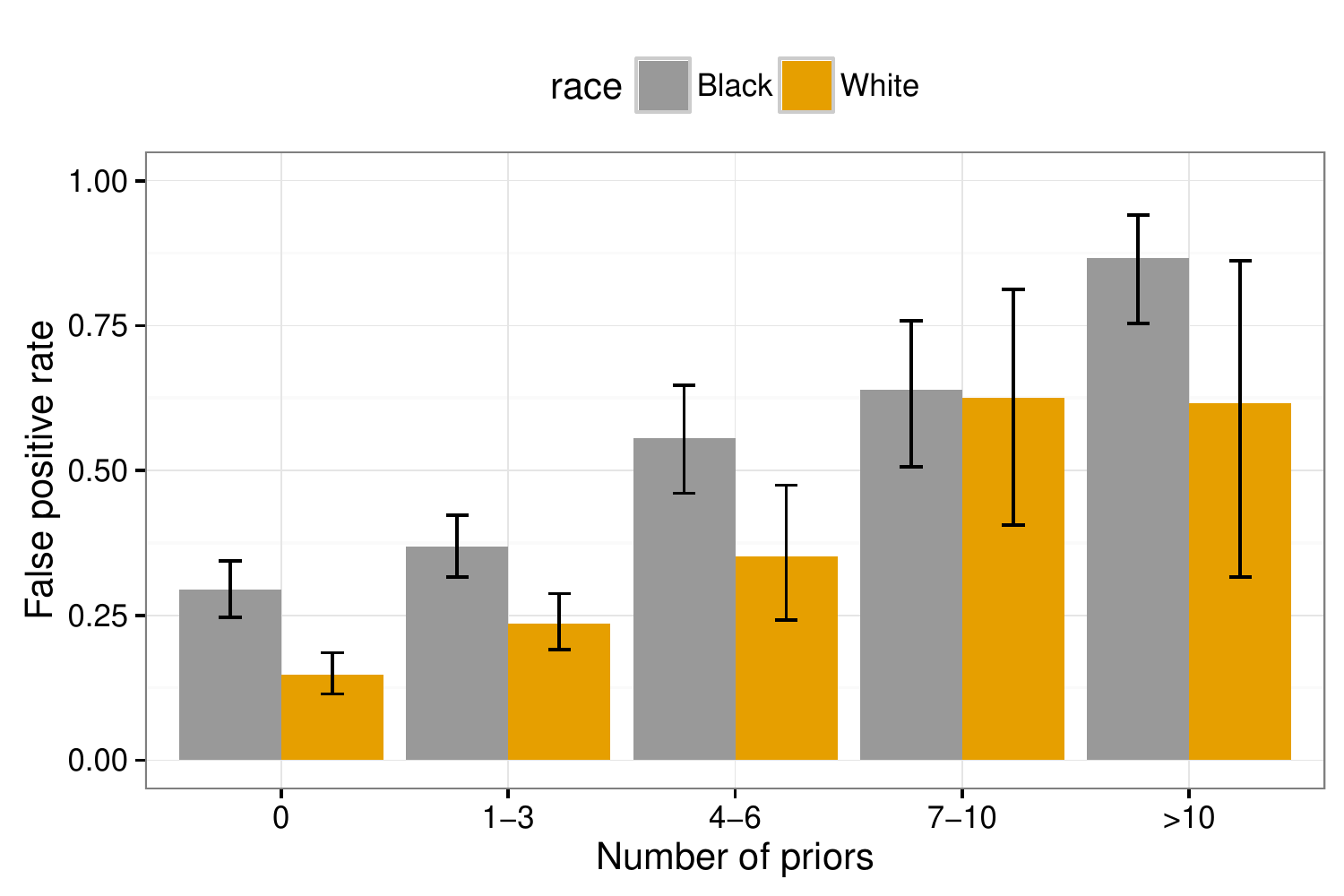}
   \caption{All defendants.}
 \end{subfigure}
 \hspace{1em}
 \begin{subfigure}[t]{0.48\linewidth}
   \includegraphics[width=\textwidth]{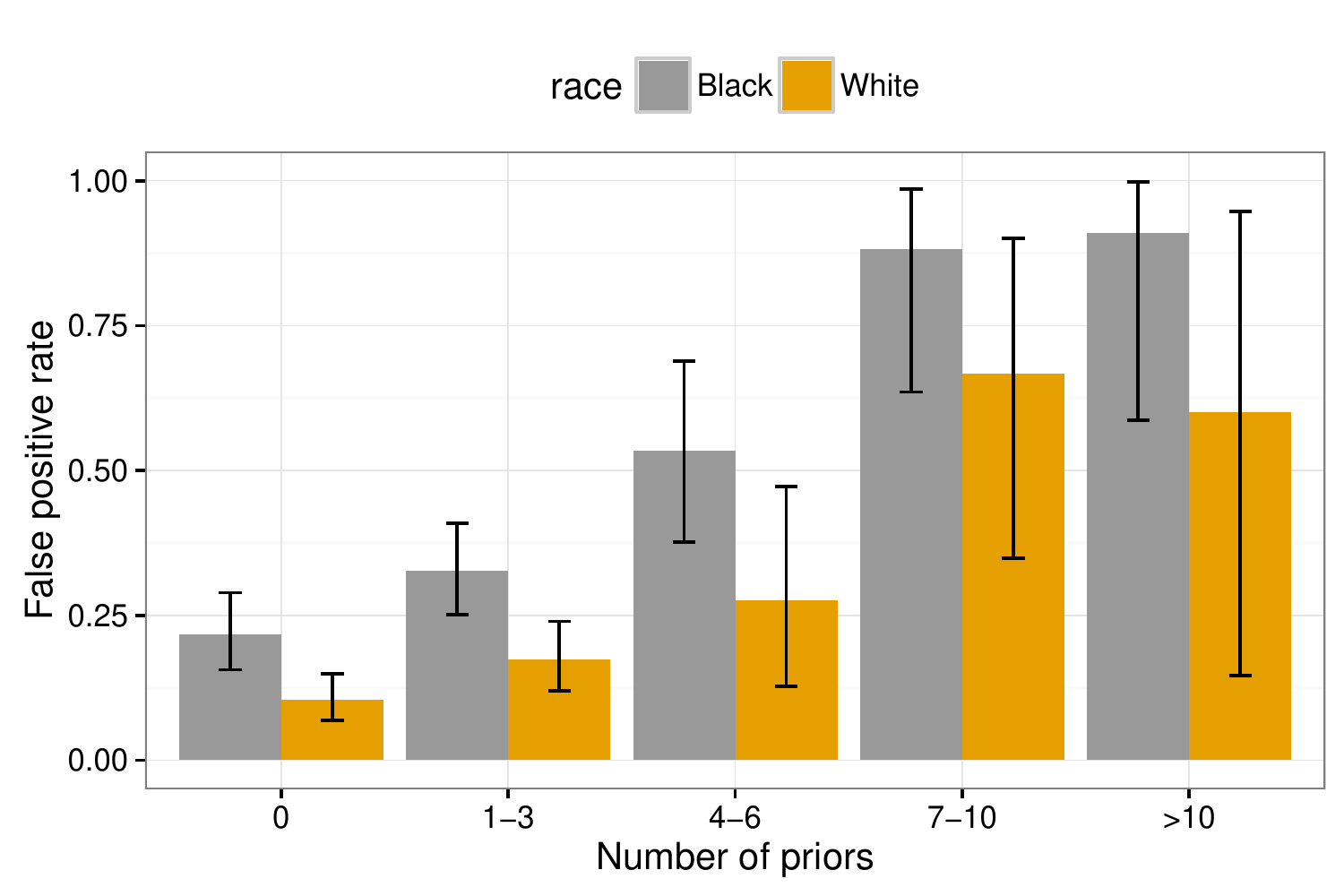}
   \caption{Defendants charged with a Misdemeanor offense. }
 \end{subfigure}
 
 \vspace{1em}
 \caption{False positive rates across prior record count.  Plot is based on assessing a defendant as ``high-risk'' if their COMPAS decile score is $> s_\mathrm{HR} = 4$. Error bars represent 95\% confidence intervals.}
 \label{fig:fpr_prior}
\end{figure}

\subsection{Connections to measures of differences in distribution}

 \begin{figure}[t]
    \centering
    \includegraphics[width = 0.75\linewidth]{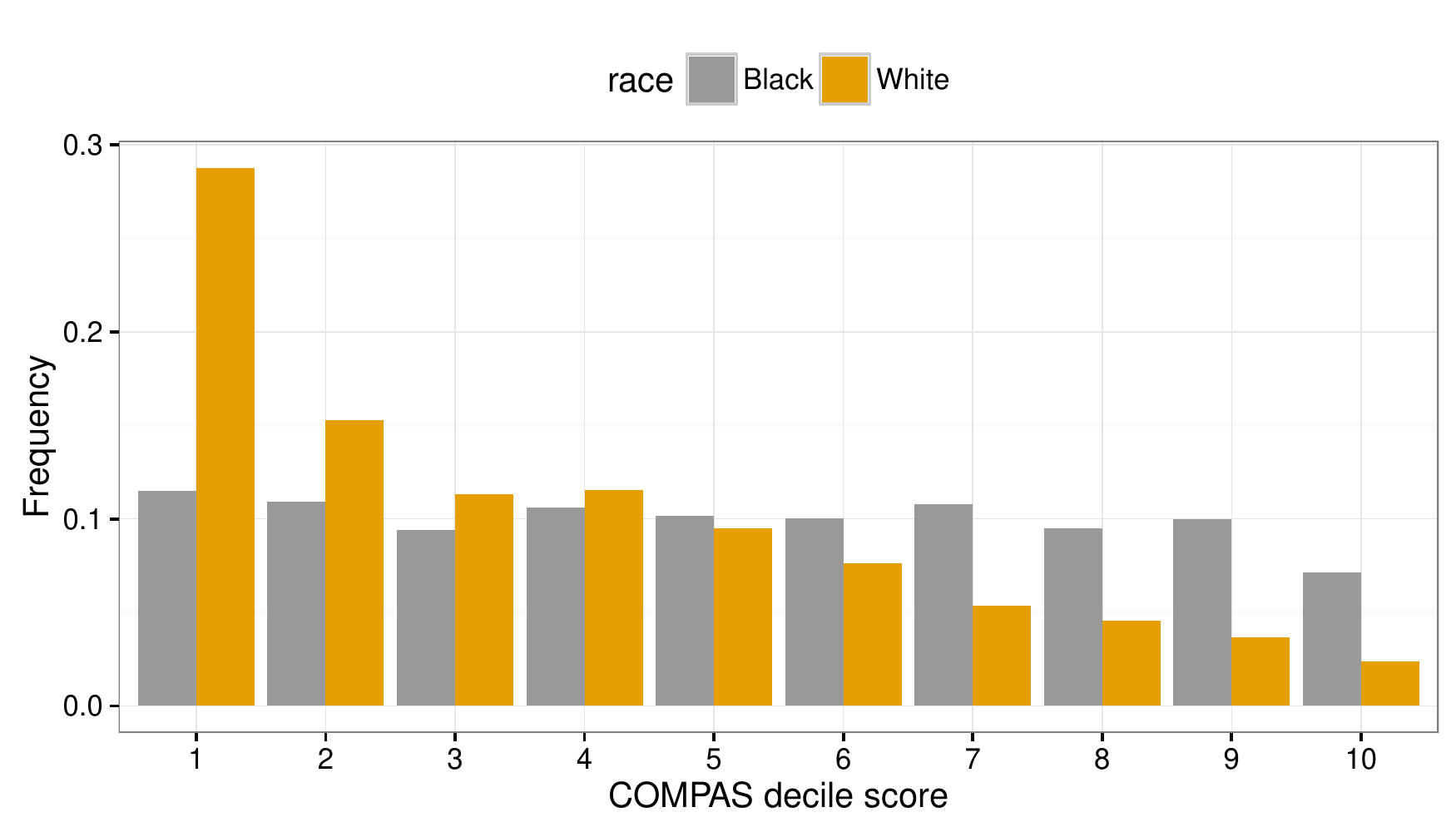}
    \caption{\small COMPAS decile score histograms for Black and White defendants.  Cohen's $d = 0.60$, non-overlap $d_{\mathrm{TV}}(f_b, f_w) = 24.5\%$.}
    \label{fig:score_plot}
 \end{figure}

  In their analysis of the PCRA instrument, \citet{skeem2015risk} remark that some applications of the risk score could create disparate impact due to differences in the  score distributions between black and white offenders.  To summarize the distributional difference in scores between the two groups, the authors report a Cohen's $d$ of $0.34$, with a corresponding non-overlap of 13.5\%.  A natural question to ask is whether the level of disparity in sentence duration, $\Delta$, is in some sense closely related to such measures of distributional difference.  With a small generalization of the \emph{\% non-overlap} measure, we can answer this question in the affirmative.  
  
  The \% non-overlap of two distributions is generally calculated assuming both distributions are normal, and thus has a one-to-one correspondence to Cohen's $d$ \cite{cohen1988}.\footnote{$d = \frac{\bar S_b - \bar S_w}{SD}$, where $SD$ is a pooled estimate of standard deviation.} However, as we can see from Figure~\ref{fig:score_plot}, the COMPAS decile score is far from being normally distributed in either group.  A more reasonable way to calculate \% non-overlap in such cases is to note that in the Gaussian case \% non-overlap is equivalent to the total variation distance.  Letting $f_{r,y}(s)$ denote the score distribution among individuals in group $r$ with recidivism outcome $y$, one can establish the following sharp bound on $\Delta$.
\begin{prop}[Percent overlap bound]
  Under the MinMax policy,
  \[
  \Delta(y_1, y_2) \le  (t_\mathrm{max} - t_\mathrm{min}) d_\mathrm{TV}(f_{b,y_1}, f_{w,y_2}).
  \]
  \label{prop:overlap}
\end{prop}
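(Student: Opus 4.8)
The plan is to reduce the statement to a one-line consequence of Proposition~\ref{prop:main} combined with the variational characterization of total variation distance. By Proposition~\ref{prop:main}, under the MinMax policy
\[
\Delta(y_1,y_2) = (t_\mathrm{max} - t_\mathrm{min})\big(\P(S > s_\mathrm{HR} \mid R = b, Y = y_1) - \P(S > s_\mathrm{HR} \mid R = w, Y = y_2)\big),
\]
so the entire task is to bound the bracketed difference of probabilities. The key observation is that each probability is simply the mass that a conditional score distribution places on the common upper set $A = \{s : s > s_\mathrm{HR}\}$: with $f_{b,y_1}$ and $f_{w,y_2}$ the conditional score distributions, the first term is the mass of $f_{b,y_1}$ on $A$ and the second is the mass of $f_{w,y_2}$ on $A$.

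First I would invoke the standard identity $d_\mathrm{TV}(f,g) = \sup_{A}\big| \int_A f - \int_A g \big|$, where the supremum ranges over measurable sets. Applying it to the specific set $A = \{s > s_\mathrm{HR}\}$ gives immediately
\[
\P(S > s_\mathrm{HR} \mid R = b, Y = y_1) - \P(S > s_\mathrm{HR} \mid R = w, Y = y_2) \le d_\mathrm{TV}(f_{b,y_1}, f_{w,y_2}).
\]
Since sentencing guidelines satisfy $t_\mathrm{max} \ge t_\mathrm{min}$, the factor $(t_\mathrm{max} - t_\mathrm{min})$ is nonnegative, and multiplying the display above by it yields the claimed inequality. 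Because the COMPAS decile score is discrete, I would state the TV identity in its summation form, but the argument is unchanged.

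To justify calling the bound \emph{sharp}, I would recall that the supremum defining $d_\mathrm{TV}$ is attained by the set $A^\star = \{s : f_{b,y_1}(s) > f_{w,y_2}(s)\}$. Whenever the two conditional score distributions are ordered by the monotone likelihood ratio property---so that $A^\star$ is itself an upper set of the form $\{s > s^\star\}$---choosing the threshold $s_\mathrm{HR} = s^\star$ turns the inequality into an equality, so no universal improvement of the constant is possible. The only points requiring care are the convention-dependent factor in the definition of $d_\mathrm{TV}$ and the discrete-versus-continuous bookkeeping; neither affects the variational inequality, which is where all the content lies. Indeed the main obstacle is purely expository---recognizing the difference of tail probabilities as the TV functional evaluated at one particular set---after which both the bound and its sharpness are immediate.
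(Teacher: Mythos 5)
Your proposal is correct and follows essentially the same route as the paper: apply Proposition~\ref{prop:main} and then bound the difference of tail probabilities by the variational characterization of total variation distance evaluated at the event $\{S > s_\mathrm{HR}\}$. The additional discussion of when the bound is attained matches the paper's informal remark on sharpness and adds nothing that changes the argument.
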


\vspace{-3em}

This result is simple to understand.  When there is some non-overlap between the score distributions for two groups, the worst case scenario is that the non-overlap is entirely due to mass shifting from scores below $s_\mathrm{HR}$ to those above $s_\mathrm{HR}$.  In such cases, the inequality becomes an equality.

\subsection{Empirical results} \label{sec:empirical}

In this section we present some empirical results based on two hypothetical sentencing rules: the \emph{MinMax} rule introduced in the previous section, and the \emph{Interpolation} rule, which we will introduce below.  Though the offenders in our data set come from Broward County, Florida, our empirical analysis is modelled on the sentencing guidelines of the State of Pennsylvania. 

The penalty ranges $t_\mathrm{min}$ and $t_\mathrm{max}$ are selected by approximately matching each offender's charge degree (M2 - F1) to a sentence range in Pennsylvania's Basic Sentencing Matrix (PA Code \textsection 303.16).  This matrix provides sentence ranges based on the charge degree for the current offense and the defendant's prior record score (0 - 5+).  We do not have enough information in the Broward County data to reliably assign a prior record score for each individual.  Our results are based on using the sentencing range corresponding to a prior record score of $1$ for all defendants in the data.  

\begin{figure}[ht]
 \centering
 \begin{subfigure}[t]{0.48\linewidth}
   \includegraphics[width=\textwidth]{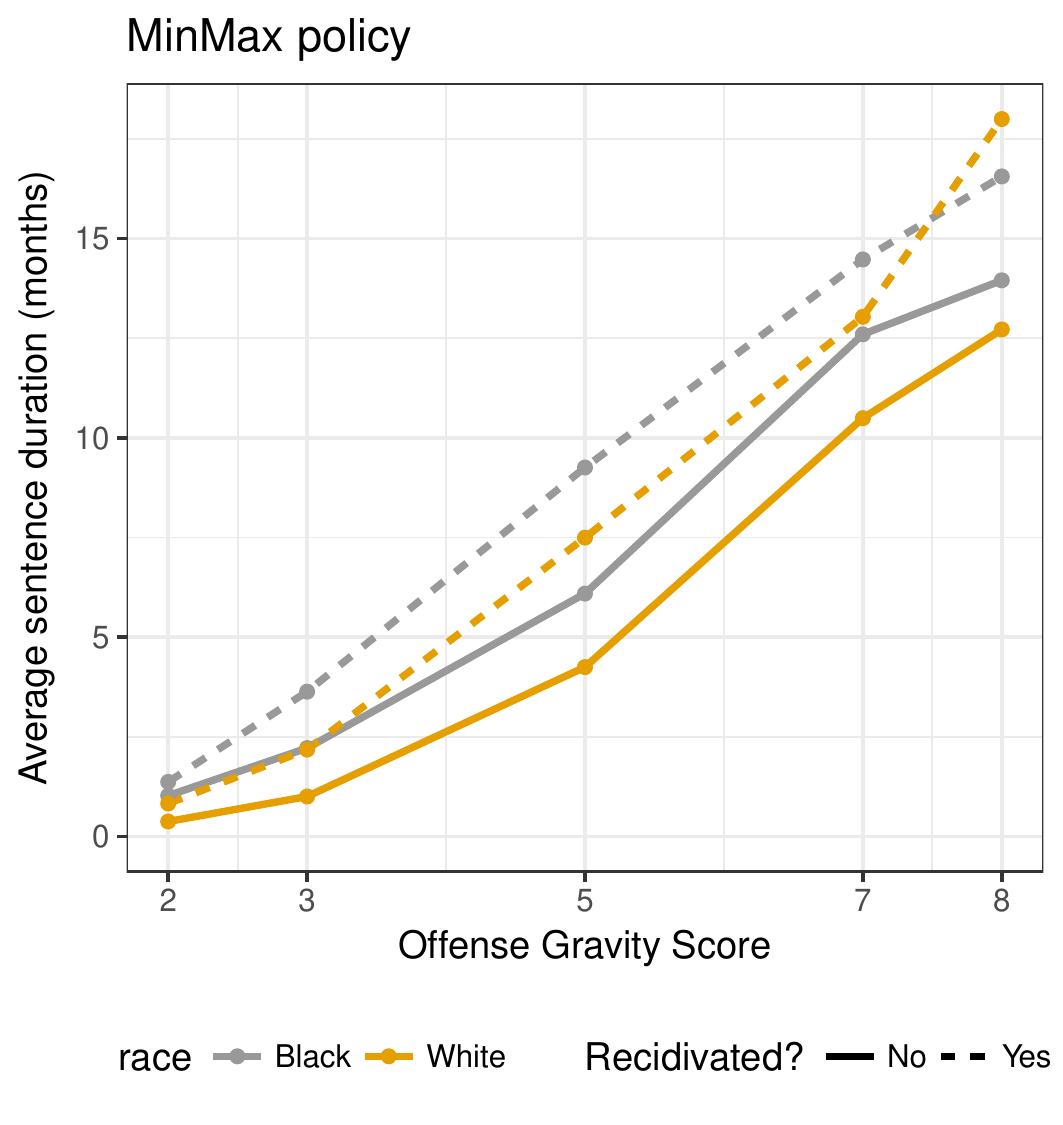}
 \end{subfigure}
 \hspace{1em}
 \begin{subfigure}[t]{0.48\linewidth}
   \includegraphics[width=\textwidth]{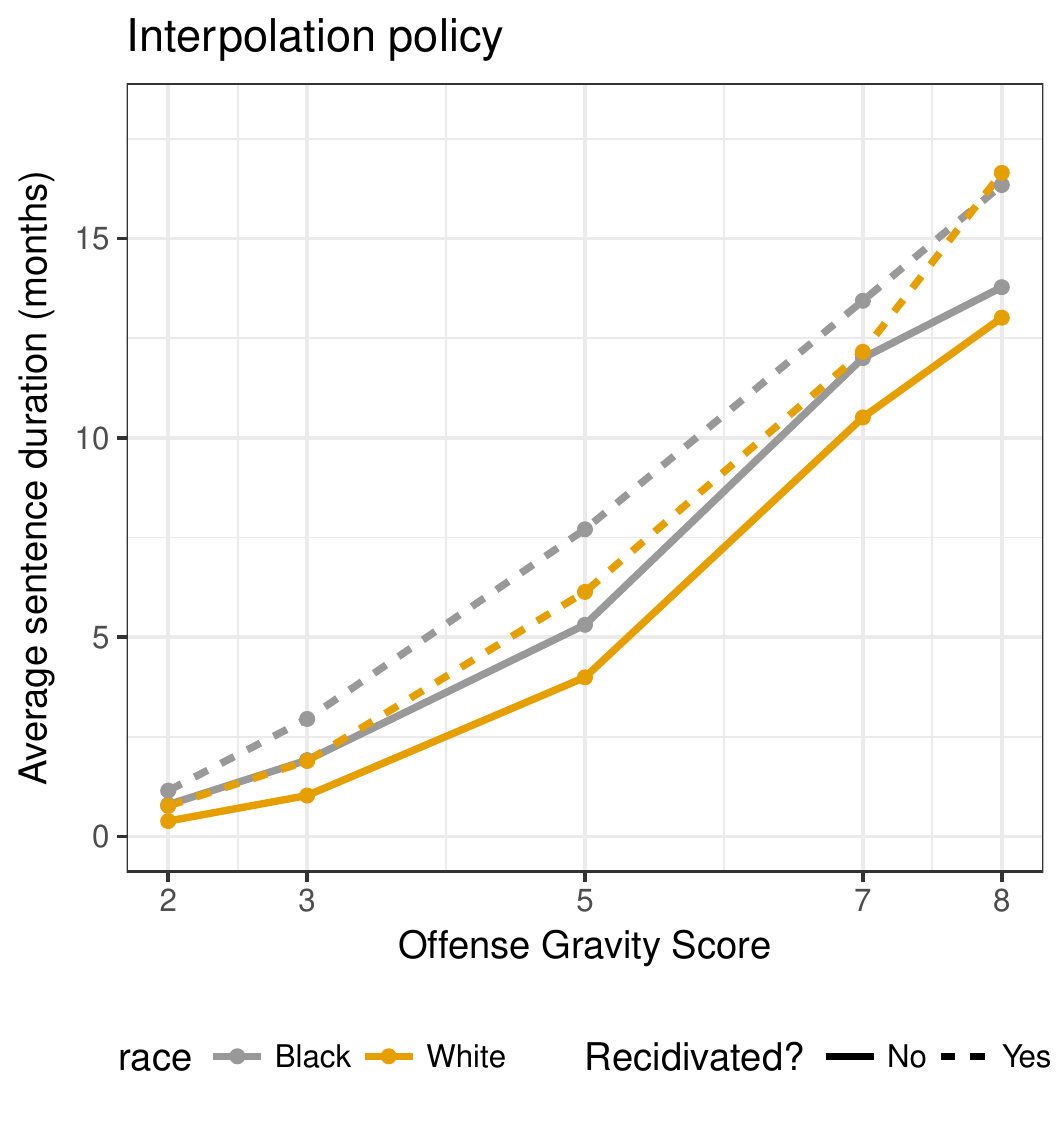}
 \end{subfigure}

 \vspace{1em}
 \caption{Average sentences under the hypothetical sentencing policies described in Section~\ref{sec:empirical}.  The mapping between the $x$-axis variable and the offender's charge degree is given in Table~\ref{tab:ogsmap}.  For all OGS levels except 8, observed differences in average sentence are statistically significant at the $0.01$ level.}
 \label{fig:empirical}
\end{figure}

Figure~\ref{fig:empirical} shows the expected sentences for black and white defendants broken down by observed recidivism outcome.  The $x$-axis in these figures is taken to be the offense gravity score, which for the purpose of this analysis is mapped to charge degree as indicated in Table~\ref{tab:ogsmap}.

\begin{table}[h]
  \centering
    \begin{tabular}{|r|ccccc|}
      \hline
      Offense gravity score & 2 & 3 & 5 & 7 & 8 \\ 
      Charge Degree & (M2) & (M1) & (F3) & (F2) & (F1) \\ 
      \hline
    \end{tabular}
\caption{Mapping between offense gravity score and charge degree used in the empirical analysis.}
\label{tab:ogsmap}
\end{table}

Results are shown for both the MinMax policy introduced earlier in this section, and the Interpolation policy, which is given by
\begin{equation}
    T_{\mathrm{Int}}(s) = t_{\mathrm{min}} + \frac{s - 1}{9}(t_{\mathrm{max}} - t_{\mathrm{min}}).
\end{equation}
Unlike the MinMax policy, which is based on the coarsened score, the Interpolation policy assigns sentences by linearly interpolating between $t_\mathrm{min}$ and $t_\mathrm{max}$ based on the assigned decile score.  We see that under both policies there are consistent trends in the expected sentences.  Black defendants are observed to receive higher sentences than white defendants both within the non-recidivating subgroup and the recidivating subgroup (except in the F1 charge degree category, where sample sizes are small and results are non-significant).  Since white defendants have higher false negative rates and lower false positive rates than black defendants, the empirical results are consistent with the theoretical results presented earlier in this section.  

\section{Revisiting predictive parity}

In this final section we revisit the notion of predictive parity and further discuss its implications for general classifiers.  We know from equation~\eqref{eq:fpr_fnr} that when the positive predictive values are constrained to be equal but the prevalences differ across groups, the false positive and false negative rates cannot both be equal across those groups.  While we have no direct control over recidivism prevalence, we do have some control over the PPV and error rates of our classifiers.  At least in principle, we are free to tune our classifiers in any of the following ways:

\begin{enumerate}[(i)]
  \item Allow unequal false negative rates to retain equal PPV's and achieve equal false positive rates
  \item Allow unequal false positive rates to retain equal PPV's and achieve equal false negative rates
  \item Allow unequal PPV's to achieve equal false positive and false negative rates
\end{enumerate}

\begin{figure}[ht]
  \centering
  \includegraphics[width=0.85\linewidth]{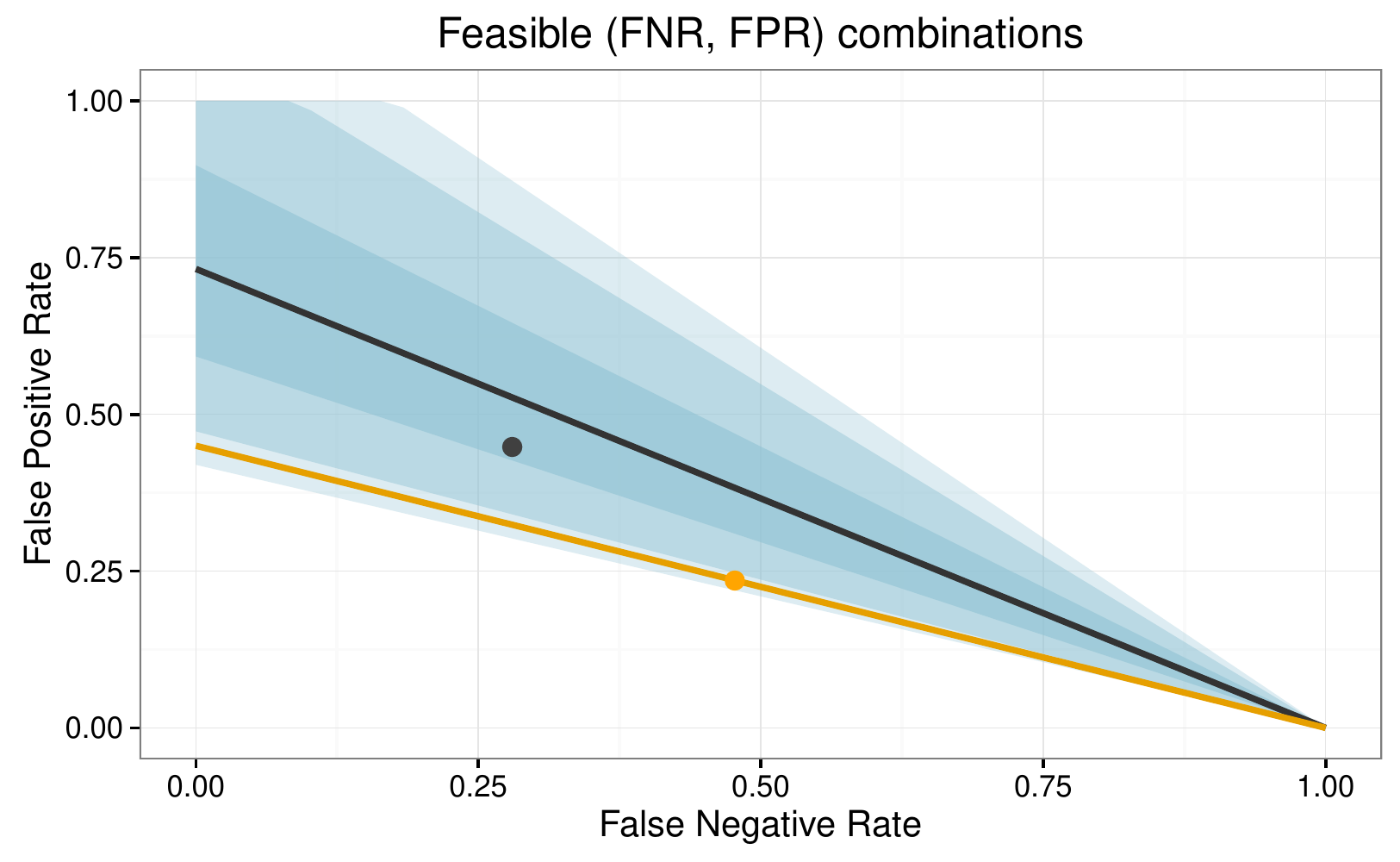}
  \caption{The two points represent the observed values of (FNR, FPR) for Black and White defendants.  The orange line represents feasible values of (FNR, FPR) for White defendants when the prevalence $p_w$ and $\mathrm{PPV}_w$ are both held fixed at their observed values in Table~\ref{tab:confusion}.  The dark grey line represents feasible values of $(\fnr_b, \fpr_b)$ when the prevalence $p_b$ is held fixed at the observed value and $\mathrm{PPV}_b$ is set equal to the observed value of $\mathrm{PPV}_w = 0.591$.  \emph{Nested} shaded regions correspond to feasible values of $(\fnr_b, \fpr_b)$ if we allow $\mathrm{PPV}_b$ to vary under the constraint $|\mathrm{PPV}_b - 0.591| < \delta$, with $\delta \in \{0.05, 0.1, 0.125\}$.  The smaller $\delta$, the smaller the feasible region. }
  \label{fig:shadeplot}
\end{figure}

Figure~\ref{fig:shadeplot} helps to put these trade-offs into perspective.  From~\eqref{eq:fpr_fnr}, we can see that FPR is a linear function of FNR under constraints on PPV and $p$.  This means that, if PPV is fixed at a given value, tuning strategy (i) may require a very large increase in $\fnr$ in order to balance FPR.  The black line shows feasible combinations of $(\fnr_b, \fpr_b)$ when $\mathrm{PPV}_b$ is forced to equal the observed value $\mathrm{PPV}_w = 0.591$.  We can see that to get $\fpr_b$ to match  $\fpr_w$, we would need to increase $\fnr_b$ to around $0.7$, which would be a substantial drop in accuracy.  In view of Corollaries \ref{cor:nonrecid} and \ref{cor:recid} Strategies (i) and (ii) may generally be undesirable because while they reduce disparate impact for one subgroup (e.g., among non-recidivists), they may increase it in the other.  

The preferred approach, at least in some cases, may be to pursue strategy (iii).  This amounts to using a score $S$ that does not satisfy predictive parity in the first place, but can also be achieved by allowing the high-risk cutoff $s_{\mathrm{HR}, r}$ to differ across groups.  The shaded regions in Figure~\ref{fig:shadeplot} show feasible values of $(\fnr_b, \fpr_b)$ when we allow $\mathrm{PPV}_b$ to be within some  $\delta$ of the observed value of $\mathrm{PPV}_w$.  We can see that even at small values of $\delta$ the feasible region is quite large.  

\section{Discussion}

The primary contribution of this paper was to show how disparate impact can result from the use of a recidivism prediction instrument that is known to satisfy the fairness criterion of predictive parity.  Our analysis focussed on the simple setting where a binary risk assessment was used to inform a binary penalty policy.  While all of the formulas have natural analogs in the non-binary score and penalty setting, we find that many of the salient features are already present in the analysis of the simpler binary-binary problem.  

A key limitation of our analysis stems from potential biases in the observed data that may affect our ability to draw valid inferences concerning the fairness of an RPI.  Throughout this paper we have implicitly operated under the assumption that the observed recidivism outcome $Y$ is a suitable outcome measure for the purpose of assessing the fairness properties of a recidivism prediction instrument.  However, the true outcome of interest in this context is \emph{reoffense}, which is not what we observe.  In the latest statistics released by the Federal Bureau of Investigation\cite{fbi-clearance-2016}, it is reported that 46\% of violent crimes and 19.4\% of property crimes were successfully cleared by law enforcement agencies.  Many criminal offenders are simply never identified.  It is therefore possible that a non-negligible fraction of the individuals in our data for whom we observed $Y = 0$ did in truth reoffend.  If this is indeed the case, and if there are group differences in the rates at which offenders are caught, the findings of empirical fairness assessments may be misleading.  Understanding how such forms of data bias affect the ability to assess instruments with respect to different fairness criteria is a subject of our ongoing research efforts.

\section{Conclusion}

In closing, we would like to note that there is a large body of literature showing that data-driven risk assessment instruments tend to be more accurate than professional human judgements \citep{meehl1954clinical, grove2000clinical}, and investigating whether human-driven decisions are themselves prone to exhibiting racial bias \citep{anwar2012testing, sweeney1992influence}.  We should not abandon the data-driven approach on the basis of negative headlines.  Rather, we need to work to ensure that the instruments we use are demonstrably free from the kinds of biases that could lead to disparate impact in the specific contexts in which they are to be applied.

\appendix 

\section{Proofs} \label{appendix:proofs}

\begin{proof}[Proof of Proposition~\ref{prop:main}]  To simplify notation, we let $HR$ denote the event $\{S > s_\mathrm{HR}\}$.
  \begin{align*}
  \E(\Delta(y_1, y_2)) &= \E(T \mid R = b, Y = y_1) - \E(T \mid R = w, Y = y_2) \\
    &= t_\mathrm{max}\P(HR \mid R = b, Y = y_1) + t_\mathrm{min}(1 - \P(HR \mid R = b, Y = y_1)) \\
    & \qquad - t_\mathrm{max}\P(HR \mid R = w, Y = y_2) - t_\mathrm{min}(1 - \P(HR \mid R = w, Y = y_2)) \\
    &= t_\mathrm{max}(\P(HR \mid R = b, Y = y_1) - \P(HR \mid R = w, Y = y_2)) \\      & \qquad + t_\mathrm{min}(\P(HR \mid R = w, Y = y_2) - \P(HR \mid R = b, Y = y_1)) \\
    &= (t_\mathrm{max} - t_\mathrm{min})(\P(HR \mid R = b, Y = y_1) - \P(HR \mid R = w, Y = y_2))
  \end{align*}
\end{proof}

\begin{proof}[Proof of Proposition~\ref{prop:overlap}]
  By definition of total variation distance, for any event $A$,
  \[
  \left|(\P(A\mid R = b, Y = y_1) - \P(A \mid R = w, Y = y_2))\right| \le 
    d_\mathrm{TV}(f_{b,y_1}, f_{w,y_2})
  \]
  Applying this inequality to Proposition~\ref{prop:main} with $A = \{S_c = \mathrm{HR}\}$ gives
  \begin{align*}
    \E(\Delta(y_1, y_2)) &= (t_\mathrm{max} - t_\mathrm{min})(\P(HR \mid R = b, Y = y_1) - \P(HR \mid R = w, Y = y_2)) \\
    &\le (t_\mathrm{max} - t_\mathrm{min}) d_\mathrm{TV}(f_{b,y_1}, f_{w,y_2})
  \end{align*}
\end{proof}

\section{Covariate-adjusted false positive rates} \label{appendix:logistic}

In this section we present the results of a logistic regression analysis that we conducted in order to assess whether the observed differences in false positive rates between black and white defendants can be entirely accounted for by other covariates.  We find that adjusting for covariates decreases the gap, but it nevertheless remains large and statistically significant.

For the purpose of this analysis we consider only the subset of defendants who \emph{do not} recidivate.  The outcome variable for the logistic regression is taken to be 
\[
y = \begin{cases}
1, & S > 4 \\
0, & S \le 4
\end{cases},
\]
where $S$ denotes the COMPAS decile score.  In this setup, $y = 0$ denotes a True Negative and $y = 1$ denotes a False Positive.  Statistically significant positive coefficient estimates correspond to variables associated with increased likelihood of false positives.

Table~\ref{tab:race_alone} shows the results of regressing $y$ on race alone.  The coefficient of race in this model is large, positive, and statistically significant.  Without adjusting for other covariates, the odds that a non-recidivating Black defendant receives a high-risk assessment are $e^{0.976}= 2.6$ times higher than those of a White defendant.

Table~\ref{tab:race_adjusted} shows the results of regressing $y$ on race, age, gender, number of priors, and charge degree.  The coefficient of race is smaller than it was in the un-adjusted model, but it is nevertheless large and statistically significant.  Even after adjusting for these other factors, the odds that a non-recidivating Black defendant receives a high-risk assessment are $e^{0.547} = 1.72$ times higher than those of a White defendant.

  \begin{table}[ht]
  \centering
    \begin{tabular}{rrrrr}
      \hline
     & Estimate & Std. Error & z value & Pr($>$$|$z$|$) \\ 
      \hline
    (Intercept) & -1.183 & 0.061 & -19.33 & 0.0000 \\ 
      raceBlack & 0.976 & 0.077 & 12.60 & 0.0000 \\ 
       \hline
    \end{tabular}
    \caption{Logistic regression with race alone.}
    \label{tab:race_alone}
  \end{table}

  \begin{table}[ht]
    \centering
  \begin{tabular}{rrrrr}
    \hline
   & Estimate & Std. Error & z value & Pr($>$$|$z$|$) \\ 
    \hline
  (Intercept) & 1.397 & 0.176 & 7.92 & 0.0000 \\ 
    raceBlack & \textbf{0.547} & 0.087 & 6.30 & 0.0000 \\ 
    Age & -0.079 & 0.005 & -17.48 & 0.0000 \\ 
    sexMale & -0.291 & 0.098 & -2.97 & 0.0030 \\ 
    Number of Priors & 0.283 & 0.016 & 17.78 & 0.0000 \\ 
    chargeMisdemeanor & -0.109 & 0.088 & -1.25 & 0.2123 \\  
     \hline
  \end{tabular}
  \caption{Logistic regression with race and other covariates that may be associated with recidivism}
  \label{tab:race_adjusted}
  \end{table}
  
\newpage 
\bibliographystyle{unsrtnat}
\bibliography{recidivism}
\end{document}